
\documentclass[letterpaper, 10 pt, conference]{ieeeconf}  

\IEEEoverridecommandlockouts                              

\overrideIEEEmargins                                      



\newcommand\todo[1]{\textcolor{black}{#1}}

\usepackage{amssymb}
\usepackage{amsmath}
\usepackage{graphicx}
\graphicspath{{images/}}
\usepackage{comment}
\usepackage[linesnumbered, ruled, vlined]{algorithm2e}
\usepackage{algpseudocode}
\usepackage{float}
\usepackage{multirow}
\usepackage{stackengine}
\usepackage{caption}
\usepackage{subcaption}
\usepackage{array}
\usepackage{lipsum}
\usepackage{wrapfig}
\usepackage{booktabs}
\usepackage{color}
\usepackage{bm}
\usepackage{amsfonts}
\usepackage{pifont}
\usepackage{balance}
\usepackage[english]{babel}

\usepackage{amsthm}
\newtheorem{definition}{Definition}
\newtheorem{remark}{Remark}
\newtheorem{theorem}{Theorem}

\newcommand{\norm}[1]{\left\lVert #1\right\rVert}
\newcommand{\HRule}{\noindent\rule{\linewidth}{0.1mm}\newline}

\DeclareMathOperator*{\argmin}{arg\,min}
\usepackage{xcolor}
\newcommand{\bz}[1]{{\color{black} #1}}
\definecolor{mypink1}{rgb}{0.858, 0.188, 0.478}

\title{\LARGE \bf
Learning Differentiable Safety-Critical Control using Control Barrier Functions for Generalization to Novel Environments
}

\author{Hengbo Ma*,
        Bike Zhang*,
        Masayoshi Tomizuka, 
        and Koushil Sreenath
\thanks{*The authors contributed equally to this work and names are in alphabetical order.}
\thanks{H. Ma, B. Zhang, M. Tomizuka and K. Sreenath are with University of California, Berkeley, CA 94720, USA  (e-mail:hengbo\_ma, bikezhang, tomizuka, koushils@berkeley.edu).}
}

\begin{document}

\maketitle
\thispagestyle{empty}
\pagestyle{empty}

\begin{abstract}
Control barrier functions (CBFs) have become a popular tool to enforce safety of a control system.
CBFs are commonly utilized in a quadratic program formulation (CBF-QP) as safety-critical constraints.
A class $\mathcal{K}$ function in CBFs usually needs to be tuned manually in order to balance the trade-off between performance and safety for each environment.
However, this process is often heuristic and can become intractable for high relative-degree systems.
Moreover, it prevents the CBF-QP from generalizing to different environments in the real world. 
By embedding the optimization procedure of the exponential control barrier function based quadratic program (ECBF-QP) as a differentiable layer within a deep learning architecture,
we propose a differentiable safety-critical control framework that enables generalization to new environments for high relative-degree systems with forward invariance guarantees.
Finally, we validate the proposed control design with 2D double and quadruple integrator systems in various environments. 
\end{abstract}

\section{Introduction}

Safety plays a critical role in autonomous systems that interact with people, such as autonomous driving and robotics.
There are several approaches to developing a safe control strategy, e.g., Hamilton-Jacobi reachability analysis \cite{bansal2017hamilton} and model predictive control \cite{kerrigan2000invariant}. However, such methods may have high computational costs in real-time applications. Control barrier functions (CBFs) \cite{ames2016control} have gained more attention recently since these methods only depend on the current state and do not require heavy computation. 
CBFs are usually encoded as constraints in a quadratic program (CBF-QP) for safety-critical tasks \cite{ames2014control}. 
With a properly chosen class $\mathcal{K}$ function in CBFs, a system can avoid unsafe sets. Meanwhile, it does not reduce the stabilizing performance from a high-level controller \cite{ames2016control}.
However, the performance of the overall controller, which consists of a high-level controller and CBF-QP, can be easily undermined if the environment changes. 
In other words, each safe set in CBFs necessitates a unique class $\mathcal{K}$ function that maximizes the overall control performance for a specific environment.
In the real world, the environment information for safety-critical tasks is usually not fully known a priori, and a system might also face different environments during its deployment. 
Thus, it is hard to tune a class $\mathcal{K}$ function for each environment beforehand to reconcile performance and safety. 
Moreover, the tuning process for choosing a class $\mathcal{K}$ function becomes tedious when there are multiple control barrier function constraints in the CBF-QP \cite{wang2016multi}, or some are with high relative-degree in the ECBF-QP \cite{nguyen2016exponential}.
This challenge impedes the progress towards deploying CBF-based safety-critical controllers in the real world.

To address this challenge, we investigate how to model the relation between environment information and safety-critical control. 
We propose a learning safety-critical control framework using an environment-dependent neural network which satisfies the forward invariance condition.
Thanks to the development of differentiable convex optimization \cite{agrawal2019differentiable}, we can enable the learning procedure in an end-to-end style. After offline training, we can directly deploy the proposed safety-critical control framework in different environments without any adaption.

\subsection{Related Work}

\subsubsection{Safe environment generalization}
Cluttered environments have been considered in the safe control literature.
A provably approximately correct-bayes framework is proposed to synthesize controllers that provably generalize to novel environments in \cite{majumdar2021pac}.
A control Lyapunov function and control barrier function based quadratic program (CLF-CBF-QP) is utilized with a high-level path plan to navigate through obstacle-scattered environments in \cite{barbosa2020provably}.
Moreover, for hostile environments with adversarial agents, a probabilistic tree logic method is proposed in \cite{cizelj2011probabilistically} to assure safety.
Safe generalization problem with control barrier functions is considered with a weighted mixture of existing controllers in \cite{song2021generalization}. Yet the generalization ability is largely limited by the number of existing controllers.
With the help of reinforcement learning, safe environment adaptation is studied through a risk-averse approach in \cite{zhang2020cautious}. However, this approach can only provide relative safety instead of safety guarantee.
For robotic applications, bipedal robot walking on stepping stones is addressed in \cite{nguyen2015optimal} using a robust control barrier function method, where the distances between adjacent stones are different at each step. 
Predictive control with CBFs tackles the safe car overtaking problems in \cite{zeng2021mpccbf}, where different leading cars serve as novel environments.
Our approach adopts a different way using class $\mathcal{K}$ function to generalize a controller to enforce safety under different environments.

\subsubsection{Safe learning control}
A safe reinforcement learning (RL) framework under constrained Markov decision process is proposed in \cite{chow2018lyapunov} using a Lyapunov based method.
A learning-based control barrier function from expert demonstration is proposed in \cite{robey2020learning} to ensure safety.
In \cite{scukins2021using}, a CBF is created using RL for risk mitigation in adversarial environments.
In \cite{choi2020reinforcement} and \cite{taylor2020learning}, they address the model uncertainty problem by learning CBF constraints.
In \cite{dawson2021safe}, the authors design a learning robust control Lyapunov barrier function that can generalize despite model uncertainty.
A model-free safe reinforcement learning is studied by synthesizing a barrier certificate and querying a black-box dynamic function in \cite{zhao2021model}.
A game theoretic approach is adopted in \cite{tian2021safety} to reduce conservatism while maintaining robustness during human robot interaction.
Differentiable optimization layers have emerged as a new approach for safe learning control recently.
In \cite{parwana2021recursive}, a differentiable layer is applied to control barrier function based quadratic program in order to enhance the recursive feasibility, where the parameters are adapted online.
In \cite{emam2021safe}, safety is framed as a differentiable robust CBF layer in model-based RL.
We also utilize the differentiable optimization layer as a tool. 
However, we focus on generalizing the safety-critical control to novel environments.

\subsection{Contributions}
The contribution of this paper is as follows:
\begin{itemize}
    \item We present an approach to generalizing safety-critical control to novel environments by integrating control barrier functions and differentiable optimization.
    \item We introduce a neural network based ECBF-QP and formulate the safety-critical control as a differentiable optimization layer.
    \item We show that the proposed neural network module based on the exponential control barrier function assures the forward invariance of a safe set.
    \item We numerically validate the proposed learning control design using systems with different relative-degrees and novel environments with randomly generated obstacles.
\end{itemize}

\subsection{Organization}
This paper is organized as follows: in Sec.~\ref{sec:background}, we introduce the background of control barrier functions and differentiable optimization. 
The problem formulation is illustrated in Sec.~\ref{sec:statement}, where we motivate the formulation with a simple case study.
Then, in Sec.~\ref{sec:method}, we present the methodology of learning differentiable safety-critical control using control barrier functions. 
In Sec.~\ref{sec:experiment}, we test the proposed control logic on 2D double and quadruple integrator systems with different environment settings. 
Secs.~\ref{sec:discussion} and \ref{sec:conclusion} provides discussion and concluding remarks.
\section{Background}
\label{sec:background}

Throughout this paper, we will consider a nonlinear control affine system:
\begin{equation}
    \dot{x} = f(x) + g(x)u,
    \label{eq:system}
\end{equation}
where $x \in \mathcal{X} \subset \mathbb{R}^n$ represents the state of the system, $u \in \mathbb{R}^m$ is the control input, and $f : \mathcal{X} \rightarrow \mathbb{R}^{n}$ and $g: \mathcal{X} \rightarrow \mathbb{R}^{m}$ are locally Lipschitz continuous.

\subsection{Control Barrier Functions}
\begin{definition}
\label{def:class_K}
\cite{khalil2002nonlinear} A Lipschitz continuous function $\alpha: [0,a) \rightarrow [0, \infty), a > 0$ is said to belong to class $\mathcal{K}$ if it is strictly increasing and $\alpha(0) = 0$.
Moreover, $\alpha$ is said to belong to class $\mathcal{K}_\infty$ if it belongs to class $\mathcal{K}$, $a = \infty$, and $\lim_{r\to\infty} \alpha(r) = \infty$.
\end{definition}
\begin{definition}
\label{def:cbfs}
\cite[Def.~2]{ames2019control} Consider a continuously differentiable function $h: \mathcal{X} \subset \mathbb{R}^n \to \mathbb{R}$ and a set $\mathcal{C}$ defined as the superlevel set of $h$, $\mathcal{C} = \{ x \in \mathcal{X}: h(x) \geq 0 \}$, then $h$ is a control barrier function (CBF) if there exists an extended class $\mathcal{K}_\infty$ function $\alpha$ such that for the control system \eqref{eq:system}:
\begin{equation}
    \sup_{u \in \mathbb{R}^m} [L_f h(x) + L_g h(x) u] \geq - \alpha (h(x)).
    \label{eq:cbf_def}
\end{equation}
\label{def:cbf}
\end{definition}
\vspace{-5mm}
\noindent
If $h$ is a control barrier function on $\mathcal{X}$ and $\frac{\partial h}{\partial x} \neq 0$ for all $x \in \partial \mathcal{C}$, any Lipschitz continuous controller satisfying \eqref{eq:cbf_def} renders the set $\mathcal{C}$ forward invariant \cite[Thm.2]{ames2019control}. By incorporating \eqref{eq:cbf_def} as a constraint, a quadratic program based safety-critical controller is proposed in \cite{ames2014control}:
\HRule
\noindent \textbf{CBF-QP}:
\begin{subequations}
\label{eq:cbf-qp-all}
\begin{align}
u^{*}(x) & = & & \underset{u\in \mathbb{R}^{m}}{\argmin}  \quad \norm{u - u_\text{perf}}^2 \label{eq:cbf-qp}\\
& \text{s.t.} & & L_f h(x) + L_g h(x)u \geq - \alpha (h(x)), \label{eq:cbf-constraint}
\end{align}
\end{subequations}
\hrule
\vspace{2mm}
\noindent
where $u_\text{perf}$ is the reference control input that can be from a high-level performance controller, which is expected to achieve the control objective.
For instance, model predictive control is a popular choice as a high-level performance controller \cite{rosolia2020multi}.
In the context of safety-critical control, a control Lyapunov function is often used in a quadratic program formulation (CLF-QP) to realize stability.

\begin{remark}
In Definition \ref{def:cbfs}, an extended class $\mathcal{K}_\infty$ function is required for CBF.
Here, we restrict ourselves to a subclass: class $\mathcal{K}$ function, which can facilitate our learning algorithm.
Typically, $\alpha (x)$ is simplified as $\alpha x$, with $\alpha$ being a positive constant, which we term as a linear class $\mathcal{K}$ function.
Previous work \cite{parwana2021recursive, xiao2019feasibility, zeng2021safety} have investigated how to adjust the class $\mathcal{K}$ function in order to improve the feasibility.
In this work, we focus on learning a neural network based class $\mathcal{K}$ function to safely generalize to different environments.
\end{remark}

The CBF constraint in \eqref{eq:cbf-constraint} has been so far assumed to be relative-degree one, which typically does not held for most safety-critical constraints in robotic systems \cite{hsu2015control}.
A special type of CBFs called exponential control barrier functions (ECBFs) has been introduced to enforce arbitrarily high relative-degree CBF constraints in \cite{nguyen2016exponential}.

\begin{definition}
\cite[Def.~1]{nguyen2016exponential} Consider a r-times continuously differentiable function $h: \mathcal{X} \subset \mathbb{R}^n \to \mathbb{R}$ and a set $\mathcal{C}$ defined as the superlevel set of $h$, $\mathcal{C} = \{ x \in \mathcal{X}: h(x) \geq 0 \}$, then $h$ is an exponential control barrier function (ECBF) if there exists a row vector $K_\alpha \in \mathbb{R}^{r}$  such that for the system \eqref{eq:system}:
\begin{equation}
    \sup_{u \in \mathbb{R}^m} [L_f^r h(x) + L_g L_f^{r-1} h(x) u] \geq - K_\alpha \eta_b(x),
    \label{eq:ecbf_def}
\end{equation}
\label{def:ecbf}
for $\forall x \in \{x \in \mathbb{R}^n | h(x) \geq 0\}$, with
\begin{equation}
	\begin{aligned}
		\eta_b(x) &=
		\begin{bmatrix}
			h(x)\\
			\dot{h}(x)\\
			\ddot{h}(x)\\
			\vdots \\
			h^{(r-1)}(x)
		\end{bmatrix}
		\hspace{-2mm}
		&=
		\begin{bmatrix}
			h(x)\\
			L_{f}h(x)\\
			L_{f}^2h(x)\\
			\vdots \\
			L_{f}^{r-1}h(x)
		\end{bmatrix}.
	\end{aligned}
\end{equation}
\end{definition}
\noindent
We define $\mu = L_f^rh(x) + L_g L_f^{r-1} h(x) u$, then the above dynamics of $h(x)$ can be written as the linear system
\begin{equation}
    \begin{aligned}
    \dot{\eta_b}(x) & = F \eta_b(x) + G  \mu, \\
    h(x) & = C \eta_b(x),
    \end{aligned}
\end{equation}
where
\begin{equation}
	\begin{aligned}
		F &=
		\begin{bmatrix}
			0 & 1 & 0 & \dots & 0 \\
			0 & 0 & 1 & \dots & 0 \\
			\vdots & \vdots & \vdots & \ddots & \vdots \\
			0 & 0 & 0 & \dots & 1 \\
			0 & 0 & 0 & \dots & 0
		\end{bmatrix},
		\hspace{3mm}
		G =
		\begin{bmatrix}
			0 \\
			0 \\
			\vdots \\
			0 \\
			1
		\end{bmatrix}, \\
		C &=
		\begin{bmatrix}
			1 & 0 & \dots & 0
		\end{bmatrix}.
	\end{aligned}
\end{equation}
If $\mu \ge -K_\alpha \eta_b(x)$, with $(F-GK_\alpha)$ being Hurwitz and total negative, then we can guarantee that $h(x_0) \geq 0$ $\implies$ $h(x(t)) \geq 0, \forall t \geq 0$ where $x_0$ is the initial condition.

Let $-p_i$ be the negative real eigenvalues of $(F-GK_\alpha)$. We can then define a family of functions $v_i: \mathcal{X} \subset \mathbb{R}^n \to \mathbb{R}$ with corresponding superlevel sets $\mathcal{C}_i$,
\begin{equation}
	\begin{aligned}
		v_0(x) & = h(x),                         && \mathcal{C}_0 = \{x: v_0(x) \geq 0 \}, \\
        v_1(x) & = \dot{v}_0 + p_1 v_0 (x),      && \mathcal{C}_1 = \{x: v_1(x) \geq 0 \}, \\
               & \vdots && \vdots \\
        v_r(x) & = \dot{v}_{r-1} + p_r v_{r-1} (x),  &&\mathcal{C}_r = \{x: v_r(x) \geq 0 \},
	\end{aligned}
\label{equ: ecbf_eigen}
\end{equation}
where $\mathcal{C}_0$ plays the role of the safe set $\mathcal{C}$ as defined in Definition \ref{def:cbf} for a relative-degree one CBF.  
Then, we have:
\bz{
\begin{theorem}
\cite[Thm.2]{nguyen2016exponential} A valid exponential CBF should satisfy two conditions: suppose $K_\alpha$ is chosen such that $p_i > 0$ and the eigenvalues $-p_i$ satisfy $p_i \geq - \frac{\dot{v}_{i-1}(x_0)}{v_{i-1}(x_0)}$,
then \eqref{eq:ecbf-constraint} guarantees $h(x)$ is an exponential control barrier function.
\label{thm:ecbf}
\end{theorem}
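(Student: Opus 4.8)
The plan is to prove the equivalent and operative statement: under the stated conditions, every closed-loop trajectory $x(t)$ of \eqref{eq:system} generated by a locally Lipschitz controller that enforces \eqref{eq:ecbf-constraint} and started from $x_0$ with $h(x_0)\ge 0$ satisfies $h(x(t))\ge 0$ for all $t\ge 0$; i.e., $\mathcal{C}=\mathcal{C}_0$ is forward invariant, which is precisely what makes $h$ a valid exponential CBF in the sense of Definition~\ref{def:ecbf}. The first step is a purely algebraic observation about the cascade \eqref{equ: ecbf_eigen}: because $(F-GK_\alpha)$ has eigenvalues $-p_1,\dots,-p_r$, the row vector $K_\alpha$ is exactly the (increasing-order) coefficient list of the polynomial $\prod_{i=1}^{r}(s+p_i)$, and unrolling the recursion $v_i=\dot v_{i-1}+p_i v_{i-1}$ together with the fact that $h$ has relative degree $r$ shows that $v_r(x)=h^{(r)}(x)+(\text{lower-order Lie derivatives})=\mu+K_\alpha\eta_b(x)$. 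Hence enforcing \eqref{eq:ecbf-constraint}, i.e. $\mu\ge -K_\alpha\eta_b(x)$, is the same as forcing $v_r(x(t))\ge 0$ along the trajectory.

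The engine of the proof is the comparison lemma (Gr\"onwall's inequality) applied $r$ times down the cascade. Fix $i\in\{1,\dots,r\}$ and suppose $v_i(x(t))\ge 0$ for all $t\ge 0$. Then $v_i=\dot v_{i-1}+p_i v_{i-1}$ gives the scalar differential inequality $\dot v_{i-1}(x(t))\ge -p_i v_{i-1}(x(t))$, and integrating yields $v_{i-1}(x(t))\ge v_{i-1}(x_0)\,e^{-p_i t}$. Since $p_i>0$, the right-hand side is nonnegative as soon as $v_{i-1}(x_0)\ge 0$, so $v_{i-1}(x(t))\ge 0$ for all $t\ge 0$ — and the exponential lower bound on $v_0=h$ is what gives the method its name.

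It remains to supply the initial-condition inequalities $v_{i-1}(x_0)\ge 0$. Here the hypothesis $p_i\ge -\dot v_{i-1}(x_0)/v_{i-1}(x_0)$ enters: when $v_{i-1}(x_0)>0$ it rearranges to $v_i(x_0)=\dot v_{i-1}(x_0)+p_i v_{i-1}(x_0)\ge 0$. So I would run a short forward induction on $i$: the base case is $v_0(x_0)=h(x_0)\ge 0$ (the hypothesis $x_0\in\mathcal{C}$), and the inductive step turns $v_{i-1}(x_0)\ge 0$ into $v_i(x_0)\ge 0$; this produces $v_0(x_0),\dots,v_{r-1}(x_0)\ge 0$, while $v_r(x_0)\ge 0$ is already delivered by \eqref{eq:ecbf-constraint} at $t=0$. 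Then I would run the backward sweep over the cascade index: \eqref{eq:ecbf-constraint} gives $v_r(x(t))\ge 0$ for all $t$; combining with $v_{r-1}(x_0)\ge 0$ and the comparison-lemma step gives $v_{r-1}(x(t))\ge 0$; combining with $v_{r-2}(x_0)\ge 0$ gives $v_{r-2}(x(t))\ge 0$; descending to $i=1$ finally gives $v_0(x(t))=h(x(t))\ge h(x_0)e^{-p_1 t}\ge 0$. As a byproduct this shows the stronger fact that $\bigcap_{i=0}^{r-1}\mathcal{C}_i$ is forward invariant.

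The part I expect to need the most care is not any single estimate but the bookkeeping around degeneracies and hypotheses. Three points: (i) the ratio condition is ill-posed when $v_{i-1}(x_0)=0$, so it must be read in its rearranged form $v_i(x_0)\ge 0$, and I would state it that way before inducting — the comparison-lemma step is unaffected; (ii) the ``$(F-GK_\alpha)$ Hurwitz and total negative'' / ``all $p_i>0$ real'' assumption is exactly what guarantees the $v_i$ are real-valued scalar functions with real positive decay rates so that each Gr\"onwall step is legitimate (with a repeated eigenvalue one still gets $v_{i-1}(x(t))\ge v_{i-1}(x_0)e^{-p_i t}$ from the same differential inequality, so the argument is robust to multiplicity); and (iii) a routine but necessary remark that the closed-loop vector field $f+g\,u^{*}$ is locally Lipschitz, hence an absolutely continuous solution exists on which the differential inequalities may be integrated. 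The one genuinely error-prone logical point is the direction of the two inductions — forward on $i$ for the initial conditions, backward on $i$ for propagation along the trajectory — so I would make that ordering explicit.
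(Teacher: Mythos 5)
The paper gives no proof of this statement---it is imported verbatim by citation to \cite{nguyen2016exponential}---and your reconstruction is correct and follows exactly the argument of that cited source: the identification $v_r(x)=\mu+K_\alpha\eta_b(x)$ so that \eqref{eq:ecbf-constraint} is equivalent to $v_r\ge 0$, the comparison-lemma step $v_{i-1}(x(t))\ge v_{i-1}(x_0)e^{-p_i t}$ applied recursively down the cascade \eqref{equ: ecbf_eigen}, and the forward induction establishing $v_i(x_0)\ge 0$ from the eigenvalue bounds. Your explicit treatment of the degenerate case $v_{i-1}(x_0)=0$ (reading the ratio condition in its rearranged form $v_i(x_0)\ge 0$) and of the Lipschitz/existence hypotheses tightens points that the cited statement glosses over, and does not change the route.
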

}

Given an ECBF, we can extend the CBF-QP in \eqref{eq:cbf-qp-all} to enforce high relative-degree safety-critical constraints:
\HRule
\noindent \textbf{ECBF-QP}:
\begin{subequations}
\label{eq:ecbf-qp-all}
\begin{align}
u^{*}(x) & = & & \underset{u\in \mathbb{R}^{m}}{\argmin}  \quad \norm{u - u_\text{perf}}^2 \label{eq:ecbf-qp}\\
& \text{s.t.} & & L_f^r h(x) + L_g L_f^{r-1} h(x) u \geq - K_\alpha \eta_b(x), \label{eq:ecbf-constraint}
\end{align}
\end{subequations}
\hrule
\vspace{2mm}
\noindent
where $u_\text{perf}$ is the reference control input. 
Note that the control barrier function constraint \eqref{eq:ecbf-constraint} can be extended to multiple constraints in order to account for different safety criteria.
Furthermore, a general formulation of high order control barrier functions can be seen in \cite{xiao2021high}.
\begin{remark} \label{rmk:Lf_operator}
In ECBF-QP \eqref{eq:ecbf-constraint}, due to the relation between the coeeficients in $K_\alpha$ and the eigenvalues $-p_i$, $K_\alpha \eta_b$ can be reformulated as $\Pi_{i=1}^{r} [L_f + \text{p}_i] \circ h(x) - L_f^{r}h(x)$. This will be used to develop our differentiable safety-critical control formulation. Note that $L_f$ here is the Lie derivative operator s.t. $L_f \circ h(x) = L_f h(x)=\frac{\partial}{\partial x}h(x)f(x)$.
\label{remark: Ka}
\end{remark}

\subsection{Differentiable Optimization}
A differentiable optimization problem is a class of optimization problems whose solutions can be backpropagated through.
This functionality enables an optimization problem to serve as layers within deep learning architectures, which can encode constraints and complex dependencies through optimization that traditional convolutional and fully-connected layers usually cannot capture \cite{amos2017optnet}.
Some successful differentiable layer examples include differentiable model predictive control \cite{amos2018differentiable, drgona2020learning}, \bz{Pontryagin's maximum principle \cite{jin2021safe}, robust control \cite{donti2020enforcing}}, and meta learning \cite{lee2019meta}, etc.
In this work, we utilize the differentiable optimization layer presented in \cite{agrawal2019differentiable} for the ECBF-QP.
\begin{remark}
While CBFs are continuously differentiable functions \cite{ames2019control}, here a differentiable safety-critical control using CBF does not mean the CBF itself is differentiable but rather that the backpropagation can go through the CBF-QP differentiable optimization layer.
\end{remark}
\section{Problem Statement}
\label{sec:statement}

Having established the background of CBFs and differentiable optimizations, we now present our problem formulation for generalizing to novel environments.

\begin{figure}
\centering
    \begin{subfigure}[r]{0.49\columnwidth}
        \includegraphics[width=\textwidth]{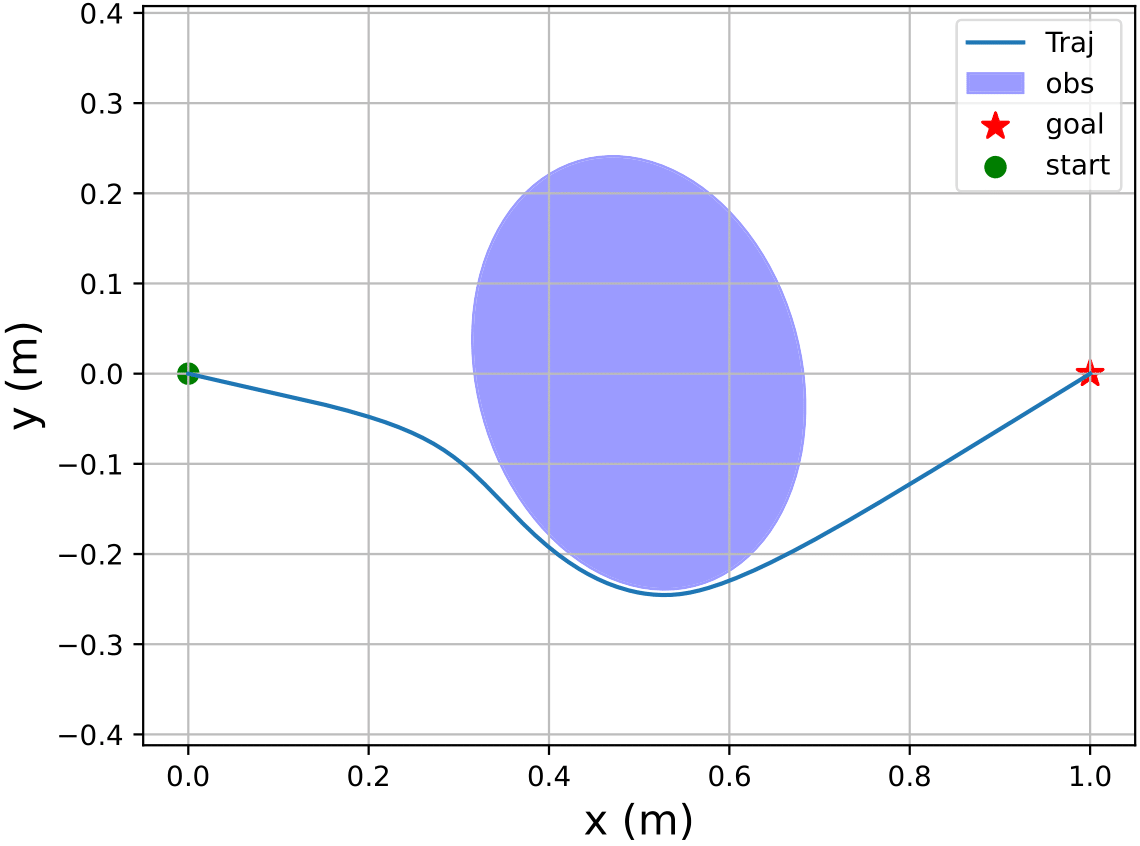}
    	\caption{Environment 1}
    \end{subfigure}
    \begin{subfigure}[r]{0.49\columnwidth}
        \includegraphics[width=\textwidth]{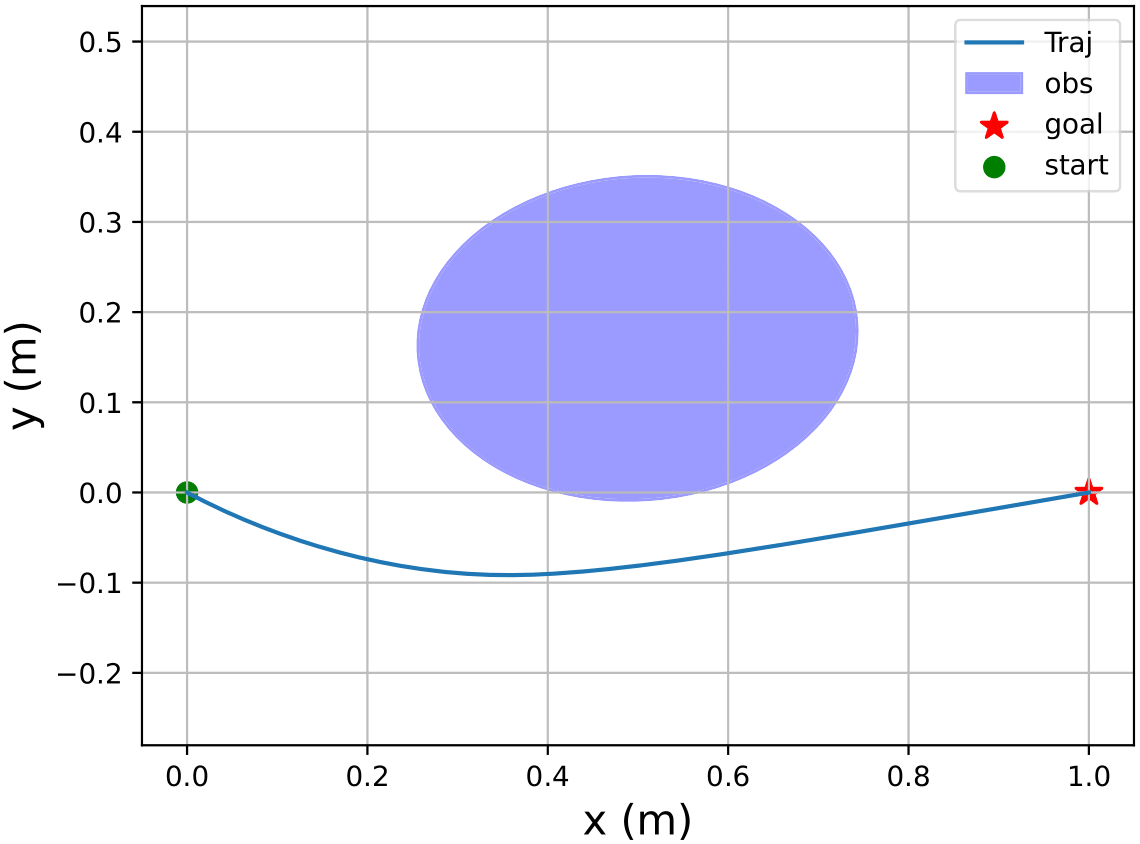}
    \caption{Environment 2}
    \end{subfigure}
    \caption{Motivating example for safety-critical control for generalization to novel environments using a 2D double integrator. A hand-tuned $K_\alpha$ for Environment 1 in (a) is used in the novel Environment 2 in (b). As can be seen, this results in a trajectory with a larger deviation from the obstacle in Environment 2. Thus, a well-tuned $K_\alpha$ for one environment does not necessarily generalize to a different environment.}
    \label{fig:motivating_example}
\end{figure}

\subsection{Motivating Example}

Using a 2D double integrator as an illustrative example, we design a linear quadratic regulator (LQR) to drive the system to a goal location while avoiding different obstacles using the ECBF-QP in \eqref{eq:ecbf-qp-all}.
The LQR controller serves as the reference performance controller $u_\text{perf}$.
The simulation results are demonstrated in Fig.~\ref{fig:motivating_example}.
Note that the $K_\alpha$ for ECBF-QP is tuned manually in Fig.~\ref{fig:motivating_example}(a), which leads to a short and smooth trajectory, i.e., a smooth trajectory that goes around the obstacle with minimal detour from a straight-line trajectory from start to goal. 
However, the ECBF-QP with the same $K_\alpha$ results in a large detour in Fig.~\ref{fig:motivating_example} (b) in a different environment.
While larger detours are conservative, they potentially require more control effort, and result in energy inefficient motions.
The motion in Environment 2 could be shorter by getting closer to the obstacle.
This example demonstrates that the $K_{\alpha}$ plays an important role in generating desirable trajectories in different environments.

\begin{remark}
In order to get a trajectory with desired properties, e.g., smoothness and minimum distance, it is necessary to choose a proper $K_{\alpha}$ using ECBF-QP.
Moreover, certain fixed $K_{\alpha}$ that works in a particular environment could actually fail in a different environment, resulting in violation of the safety constraint $h(x)\ge0$.
\end{remark}

\begin{figure*}[t]
    \centering
    \includegraphics[width=1.0\linewidth]{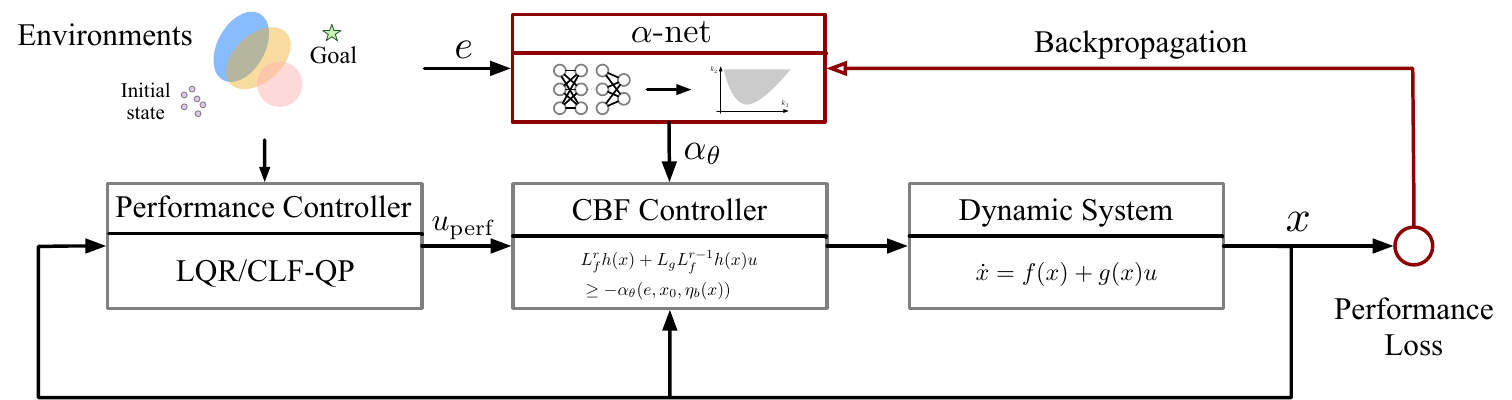}
    \caption{The overall framework of the proposed approach, which includes two main components: a performance controller and a differentiable CBF-QP. The novel environment information, $e$, is an input to the performance controller and $\alpha$ net. The performance loss computed along a trajectory will be backpropagated through the $\alpha$ net, then the $\alpha$ net outputs the parameters to construct the class $\mathcal{K}$ function.}
    \label{fig:architecture}
\end{figure*}

\subsection{Problem Formulation}
Building upon the motivating example, we are motivated to optimize the class $\mathcal{K}$ function in CBF-QP or $K_\alpha$ in ECBF-QP with respect to different environments, which can result in a safe trajectory satisfying a user-defined metric.

To this end, we represent the $K_{\alpha}\eta_{b}(x)$ in \eqref{eq:ecbf_def} with a neural network parameterized with $\theta$. $\Pi_{\theta}(u_{\text{perf}}, e, x_0)$ represents the solution of ECBF-QP mentioned in \eqref{eq:ecbf-constraint}. Such an ECBF-QP can be embedded as a layer in a deep learning pipeline by using differentiable convex optimization technique. Then we minimize a performance cost $\mathcal{L}$ in an episodic setting. The formulation is given as follows:
\begin{equation}
\begin{aligned}
& \underset{\theta}{\argmin} \quad \mathbb{E}_{x_0 \sim P_{0}, e \sim P_{e}}[\mathcal{L}(\tau, e, x_0)]\\
\text{s.t.} \quad & u = \Pi_{\theta}(u_{\text{perf}}(x), e, x_0), \\
& \dot{x} = f(x) + g(x)u,\\
\end{aligned}
\end{equation}
where $e$ is an environment sampled from a distribution of environments $P_{e}$, e.g., $e$ consists of center and radius of the obstacle.
$x$ is the state where we evaluate cost at each time step, and $x_0$ is the initial state which is sampled from a distribution $P_{0}$.
Note that $\mathcal{L}$ is the cost along a trajectory instead of the cost at each time step, and $\tau$ represents the trajectory with time horizon $T$.
$u_{\text{perf}}$ is the performance control input provided by a high-level performance controller.
Once the training procedure is done offline, we can deploy the neural network based controller $\Pi_{\theta}$ to novel environments.

\section{Methodology}
\label{sec:method}

Having seen the problem formulation, we will next introduce how to enable the generalization to novel environments via learning differentiable ECBF-QP.

The overall control architecture is shown in Fig.~\ref{fig:architecture}, which basically includes two parts: a performance controller and a differentiable ECBF-QP safety filter.
The performance controller is mainly responsible for achieving the control objective, and the differentiable ECBF-QP serves as a safety filter, which will be explained in detail in this section.

\subsection{Differentiable Safety-Critical Control using CBFs}

We formulate our differentiable safety-critical control based on exponential control barrier functions in \eqref{eq:ecbf-qp-all}.
Differentiable CBFs have been used in \cite{parwana2021recursive} and \cite{emam2021safe}. 
However, they used systems with relative-degree one or solved a relative-degree two system using a cascaded approach. 
Here, we extend it to a general formulation as follows:
\HRule
\noindent \textbf{Differentiable  ECBF-QP}:
\begin{equation}
    \label{equ:diff-cbf}
	\begin{aligned}
		&\Pi_{\theta}(u_{\text{perf}}, e, x_0) = 
		\underset{u\in \mathbb{R}^{m}, \delta \in \mathbb{R}}{\argmin} ~ \|u-u_{\text{perf}}\|^2 + \zeta \delta^2 \\
		\text{s.t.} \quad &L_{f}^{r}h(x) + L_{g}L_{f}^{r-1}h(x)u \ge - \alpha_{\theta}(e, x_0, \eta_b(x)) - \delta^2,
	\end{aligned}
	\vspace{1.5mm}
\end{equation}
\hrule
\vspace{2mm}
\noindent
where, $\Pi_{\theta}(u_{\text{perf}}, e, x_0)$ is the safe policy filtered by the ECBF-QP and conditioned on the high-level performance control input $u_{\text{perf}}$ and the environment information $e$. 
$\alpha_{\theta}(e, x_0, \eta_b(x))$ is denoted by $\alpha$-net, where $\theta$ represents parameters of the network.

As we will see next, the $\alpha$ function is a linear function (of $\eta_b$) that encodes an ECBF constraint within the differentiable ECBF-QP so as to deal with high relative-degree safety constraints, which are common in many robotic applications. We include a slack variable $\delta$ which guarantees that such an optimization is feasible during the training procedure, and $\zeta$ is a hyperparameter. Note that we do not use $\delta$ as in \eqref{equ:diff-cbf} during the test time.

\subsection{The Structure of $\alpha$-net}
Exponential control barrier function provides a formal structure to guarantee safety with a \bz{vector parameter $K_\alpha$}. 
In general, it is not easy and probably time-consuming to find the best $K_{\alpha}$ directly in order to generalize to novel environments.
We thus encode the structure of exponential CBF into our neural network. As noted in Remark \ref{remark: Ka}, our formulation is shown as follows
\begin{equation}
\begin{aligned}
&\alpha_{\theta}(e, x_0, \eta_b(x)) = \prod_{i=1}^{r}[L_f + p_{i}(e, x_0;\theta)]\circ h(x) - L_f^{r}h(x), \\
\end{aligned}
\end{equation}
\noindent
where $L_f$ is the lie derivative operator as mentioned in Remark \ref{rmk:Lf_operator}. 
Notice that for the ECBF, the right side of \eqref{eq:ecbf-constraint} only includes the lie derivative with respect to $f$. The function $\textbf{p}(e, x_0; \theta) \in \mathbb{R}^{r}$ outputs $[p_1, p_2, \dots, p_r]$, and $p_{i}(e, x_0; \theta)$ represents $p_i$ as defined in \eqref{equ: ecbf_eigen}. We use the following neural network structure:
\begin{equation}
\begin{aligned}
& \textbf{p}(e, x_0; \theta) =\text{ReLU}(\prod_{k=0}^{m}\sigma(W_kl_k)-b(x_0)) + b(x_0), \\
& b_{i}(x_0) = \text{ReLU} (-\frac{\dot{v}^{i-1}(x_0)}{v^{i-1}(x_0)}-\epsilon) + \epsilon, i=1, \dots, r,\\
& l_0 = [e, x_0], \\
\end{aligned}
\label{eq:inavriance_condition}
\end{equation}

\noindent
where $b_{i}(x_0)$ is the $i$-th element of $b(x_0)\in \mathbb{R}^{r}$, and it represents the bounds of $p_{i=1 \dots r}$ in Thm.\ref{thm:ecbf}. The parameter $\theta$ represents the weights $\{W_0, W_1, \dots, W_m\}$. $l_k$ represents the outputs of the $k$-th layer of the neural network. We concatenate the environment information $e$ and initial state $x_0$ together as the input $l_0$ and choose the ReLU function as the activation, with $\sigma(\cdot)$ being any activation function.
Then, we randomly initialize the neural network parameters with positive numbers.
\todo{
\begin{theorem}
\label{thm:nn_ecbf}
Given the function $\textbf{p}(e, x_0; \theta)$ defined in \eqref{eq:inavriance_condition}, $p_i$ satisfies the conditions in Thm.\ref{thm:ecbf}. Thus, $h(x)$ is an exponential CBF and $\mathcal{C}_{0}$ in \eqref{equ: ecbf_eigen} is forward invariant.
\end{theorem}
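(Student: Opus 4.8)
The plan is to check directly that the network output $\mathbf{p}(e,x_0;\theta)$ meets the two hypotheses of Theorem~\ref{thm:ecbf}, namely (i) $p_i>0$ for every $i$, and (ii) $p_i \ge -\dot v_{i-1}(x_0)/v_{i-1}(x_0)$ for every $i=1,\dots,r$; once these hold, Theorem~\ref{thm:ecbf} yields that $h$ is an exponential CBF, and the forward-invariance statement following Definition~\ref{def:ecbf} yields that $\mathcal{C}_0$ is forward invariant. The entire argument rests on one elementary identity for the clipped map $y \mapsto \mathrm{ReLU}(y-c)+c$: for any real $y$ and $c$, $\mathrm{ReLU}(y-c)+c = \max\{y,c\}$, so this map dominates both $c$ and $y$.

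First I would apply this identity componentwise to the second line of \eqref{eq:inavriance_condition}, taking $y = -\dot v_{i-1}(x_0)/v_{i-1}(x_0)$ and $c=\epsilon$: this gives $b_i(x_0) = \max\{-\dot v_{i-1}(x_0)/v_{i-1}(x_0),\ \epsilon\}$, hence simultaneously $b_i(x_0)\ge\epsilon>0$ and $b_i(x_0)\ge -\dot v_{i-1}(x_0)/v_{i-1}(x_0)$ for each $i$. Next I would apply the same identity to the first line of \eqref{eq:inavriance_condition}, with $y = \big[\prod_{k=0}^m \sigma(W_k l_k)\big]_i$ and $c = b_i(x_0)$, obtaining $p_i = \max\{[\prod_{k=0}^m \sigma(W_k l_k)]_i,\ b_i(x_0)\} \ge b_i(x_0)$. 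Chaining the two bounds, $p_i \ge b_i(x_0) \ge \epsilon > 0$ and $p_i \ge b_i(x_0) \ge -\dot v_{i-1}(x_0)/v_{i-1}(x_0)$, which are exactly conditions (i) and (ii). I would emphasize that these inequalities are enforced by the ReLU offsets in the architecture, so they hold for \emph{every} value of the weights $\theta$ and therefore persist throughout training.

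With (i) and (ii) established, Theorem~\ref{thm:ecbf} gives that $h$ is a valid exponential CBF for the corresponding $K_\alpha$, whose entries are the elementary symmetric functions of $p_1,\dots,p_r$ (cf. Remark~\ref{remark: Ka}); since each $p_i>0$, the eigenvalues $-p_i$ of $F-GK_\alpha$ are real and negative, so $F-GK_\alpha$ is Hurwitz and total negative, and consequently $h(x_0)\ge 0 \Rightarrow h(x(t))\ge 0$ for all $t\ge0$ along any trajectory of \eqref{eq:system} whose input satisfies \eqref{eq:ecbf-constraint} — which the differentiable ECBF-QP \eqref{equ:diff-cbf} produces. I do not expect a genuine obstacle here, since the claim is essentially a structural consequence of the ReLU parameterization, but two points deserve care: the degenerate case $v_{i-1}(x_0)=0$, where the ratio defining $b_i$ is undefined and one should argue through the definition of a valid ECBF (or a limiting/perturbation argument on $x_0$); and the fact that forward invariance of $\mathcal{C}_0$ additionally requires the closed-loop input to satisfy the ECBF constraint along the entire trajectory, which I would justify from feasibility of the ECBF-QP (nonsingular $L_g L_f^{r-1}h$, so the constraint is always attainable, with the slack $\delta$ dropped at test time as noted after \eqref{equ:diff-cbf}).
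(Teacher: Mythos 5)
Your argument is correct and is essentially the paper's own proof: the paper likewise observes that the ReLU offset construction gives $b_i(x_0) = \max\{-\dot{v}^{i-1}(x_0)/v^{i-1}(x_0), \epsilon\}$ and $p_i \ge b_i(x_0)$, chains these to obtain the two conditions of Theorem~\ref{thm:ecbf}, and then invokes the ECBF result of \cite{nguyen2016exponential} for forward invariance of $\mathcal{C}_0$. Your additional caveats (the degenerate case $v_{i-1}(x_0)=0$ and feasibility of the constraint along the whole trajectory) are reasonable points of care that the paper's proof does not address, but they do not change the route.
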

\begin{proof}
Since $b_{i}(x) = \max\{-\frac{\dot{v}^{i-1}(x_0)}{v^{i-1}(x_0)}, \epsilon\}$ and $p_{i}(e, x_0, \theta) \ge  b_{i}(x_0)$, we have $p_{i}(e, x_0, \theta) \ge \max \{-\frac{\dot{v}^{i-1}(x_0)}{v^{i-1}(x_0)}, \epsilon\}$. It follows that $p_{i}(e, x_0, \theta)$ satisfies i) $p_{i}(e, x_0, \theta) \ge -\frac{\dot{v}^{i-1}(x_0)}{v^{i-1}(x_0)}$ and ii) $p_{i}(e, x_0, \theta) \ge \epsilon > 0$, which are the conditions in Thm.\ref{thm:ecbf}. From \cite[Thm.1]{nguyen2016exponential}, the set $\mathcal{C}_{0}$ is forward invariant given $h(x)$ is a valid exponential CBF.
\end{proof}
}

\subsection{Loss Function}
In general, the loss function $\mathcal{L}(\tau, e, x_0)$ can be designed with any performance evaluation metric $\mathcal{L}_{\text{perf}}$. In our work, we propose a loss function which includes two components:
\begin{equation}
\begin{aligned}
&\mathcal{L}(\tau, e, x_0) = \mathcal{L}_{\text{perf}}(\tau, e, x_0) + \lambda_{\delta} \sum_{t=1}^{T}\delta^2_{t}, \\
\end{aligned}
\label{eq:loss}
\end{equation}
where $T$ is the number of simulation time steps with a fixed simulation interval $\Delta t$, $\tau$ is the simulated trajectory represented by $[x_0, x_1, \dots, x_T]$.
The coefficient $\lambda_{\delta}$ is for slack variable penalty.
Notice that we use a slack variable $\delta$ in \eqref{equ:diff-cbf} to make sure that the optimization program will not be interrupted by the infeasibility issue of solving the ECBF-QP. Here, $\delta_{t}$ represents the value of the slack variable $\delta$ at each time step. However, the gradient descent of the neural network $\alpha_\theta$ may lead to a solution such that $\delta_t$ is large. Hence, we include the penalty of $\delta_t$ in the loss function. The ideal situation is that $\delta_t$ is zero. 
\vspace{-0.1cm}
\subsection{Algorithm}
The training algorithm is shown in Algorithm~1.
The input is a distribution of environments $P_e$, and the output is the network weights $\theta$ of the $\alpha$-net.
For each iteration, we sample $n$ environments and rollout trajectories $\tau$, then the weights of the neural network are updated after each iteration.

\SetKwInput{KwInput}{Input}                
\SetKwInput{KwOutput}{Output} 
\begin{algorithm}]
	\DontPrintSemicolon
	\KwInput{Environment distribution $P_e$, initial state distribution $P_0$, simulation time interval $\Delta t$, simulation time horizon $T$.}
	\KwOutput{The network weights $\theta$.}
	\While{t $\leq$ number of iteration}{
		\For{i=1:n}    
		{ 
			Sample $e_i \sim P_e$, $x_{0,i} \sim P_0$\\
			Collect the trajectory $\tau_i$ by using the designed controller.
		}
		Update $\theta_{t} \rightarrow \theta_{t-1} - \lambda\frac{1}{n}\nabla_{\theta}\sum_{e_i}\mathcal{L}_{\theta}(\tau_i, e_i, x_{0,i})$
	}
	\caption{Training algorithm}
	\label{algo}
\end{algorithm}
\noindent
When the task is obstacle avoidance, we can iteratively use the learned policy for $m$ obstacles during the test time. The differentiable ECBF-QP in \eqref{equ:diff-cbf} becomes
\begin{equation}
    \label{equ:diff-cbf-1}
	\begin{aligned}
		& \Pi_{\theta}(u_{\text{perf}}, e, x_0) = \underset{u\in \mathbb{R}^{m}}{\argmin} ~ \|u-u_{\text{perf}}\|^2\\
		\text{s.t.} & \quad L_{f}^{r}h_j(x) + L_{g}L_{f}^{r-1}h_j(x)u \ge -\alpha_{\theta}(e_j, x_0, \eta_{b,j}(x)), \\ 
		& \quad j=1, \dots, m,\\
	\end{aligned}
\end{equation}
where $h_j$ represents the $j$-th exponential CBF. The environment $e$ can have multiple obstacles and each of them $e_j$ can be captured by an ECBF constraint. 
\section{Results}
\label{sec:experiment}

After developing our methodology for learning differentiable safety-critical control using CBFs, we now present the simulation results of our proposed framework using 2D double and quadruple integrator systems.

\begin{figure}
\centering
    \begin{subfigure}[r]{0.49\columnwidth}
        \includegraphics[width=\textwidth]{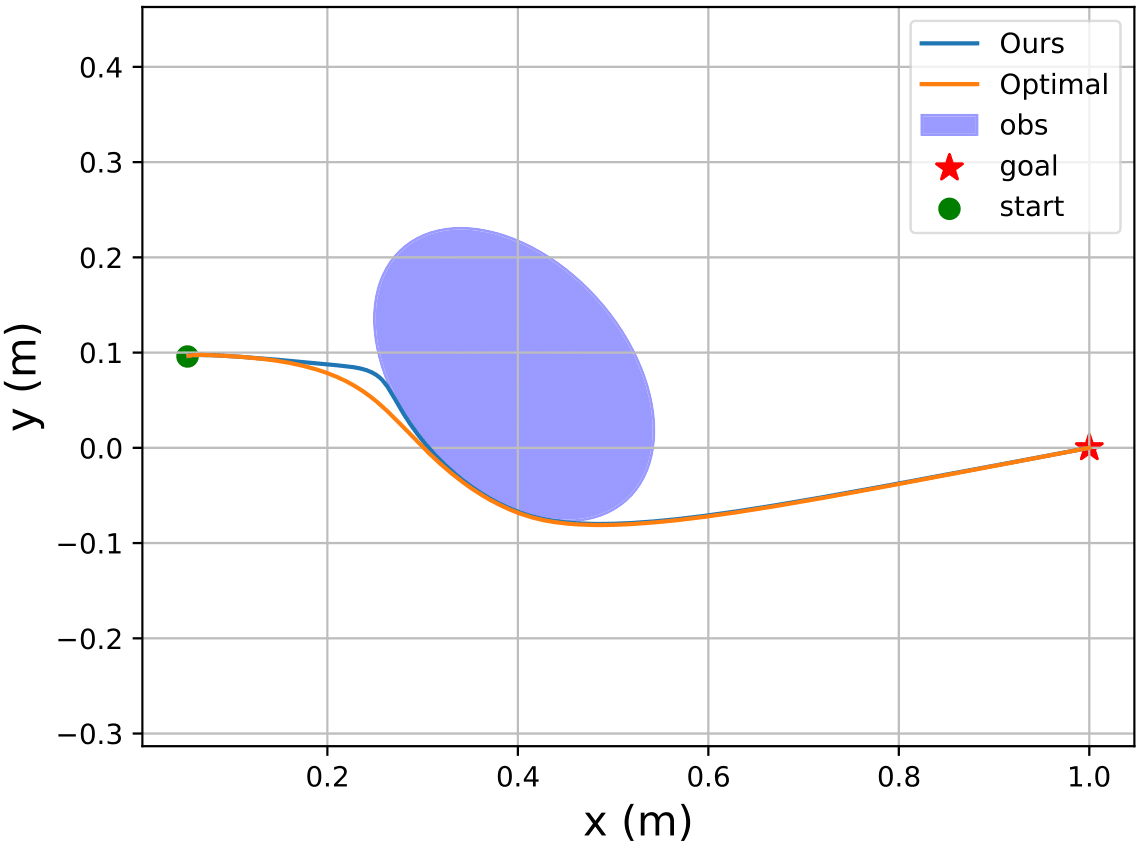}
    	\caption{Environment 1: Trajectory}
    \end{subfigure}
    \begin{subfigure}[r]{0.49\columnwidth}
        \includegraphics[width=\textwidth]{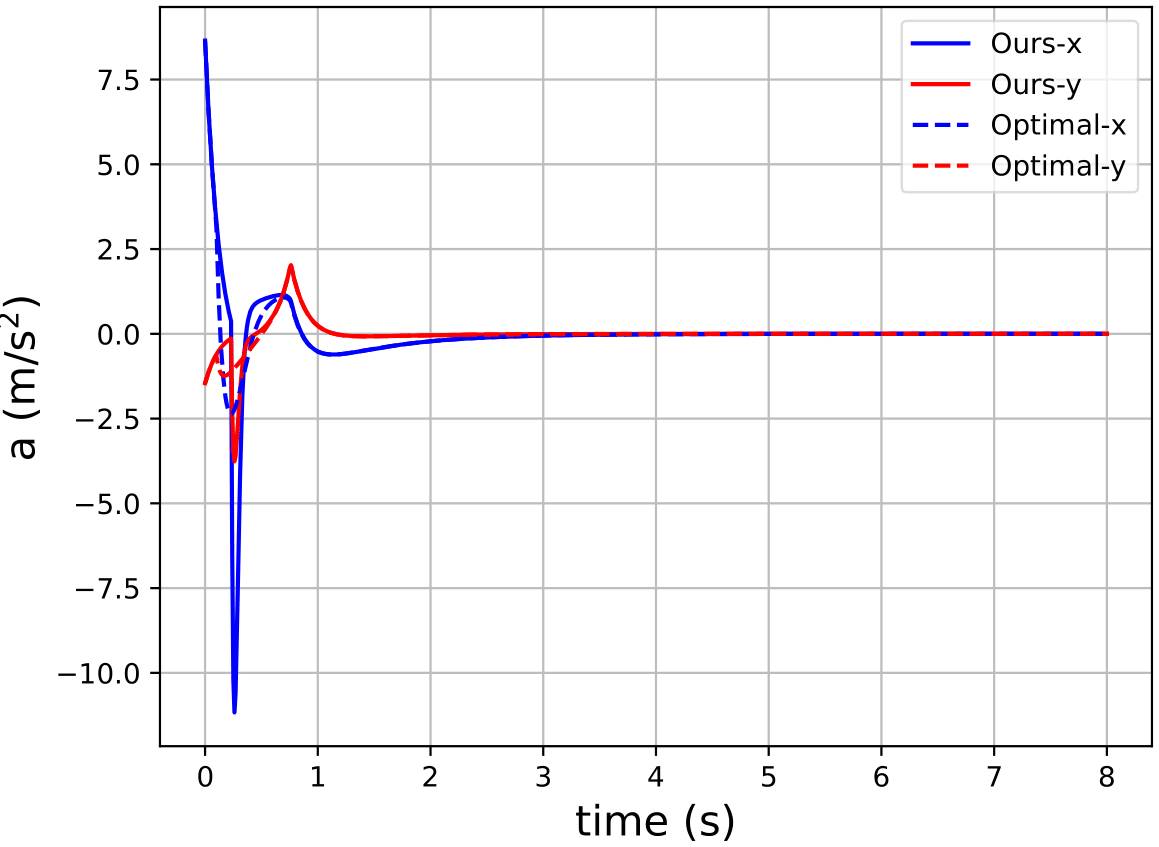}
    \caption{Environment 1: Control input}
    \end{subfigure}
    \begin{subfigure}[r]{0.49\columnwidth}
        \includegraphics[width=\textwidth]{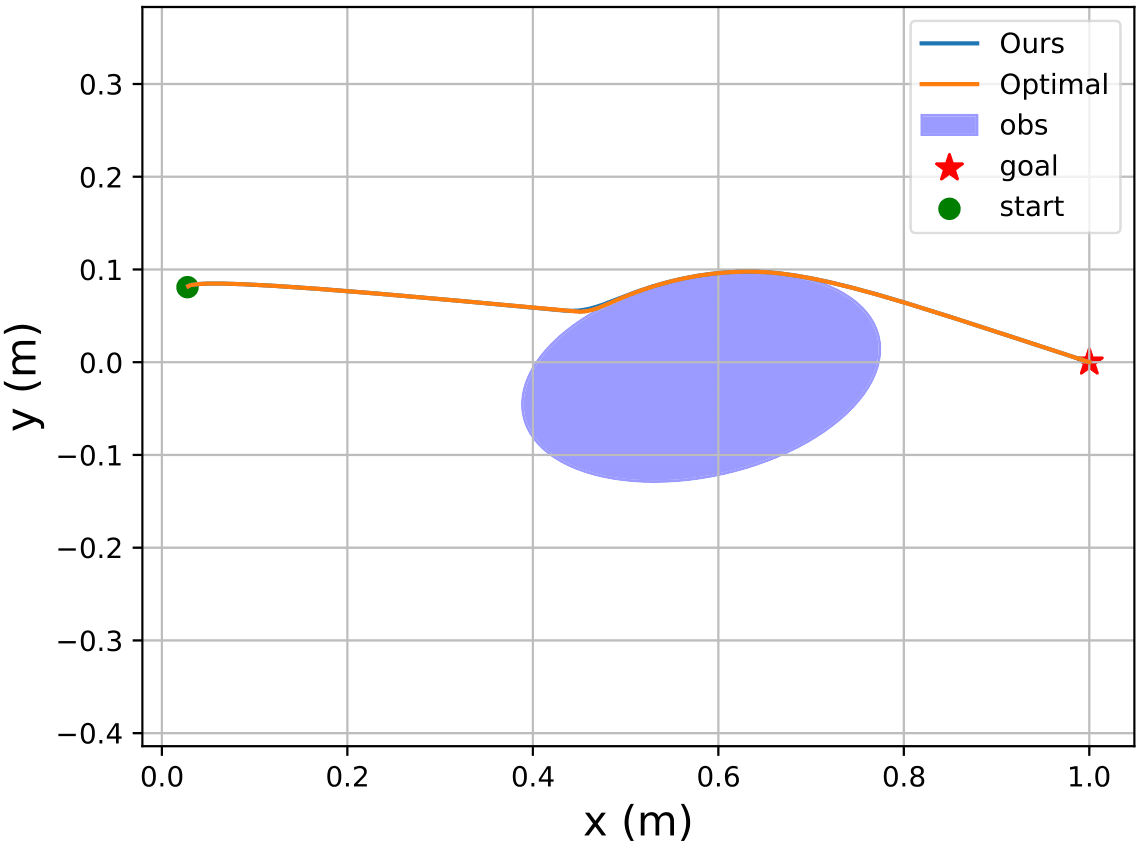}
    	\caption{Environment 2: Trajectory}
    \end{subfigure}
    \begin{subfigure}[r]{0.49\columnwidth}
        \includegraphics[width=\textwidth]{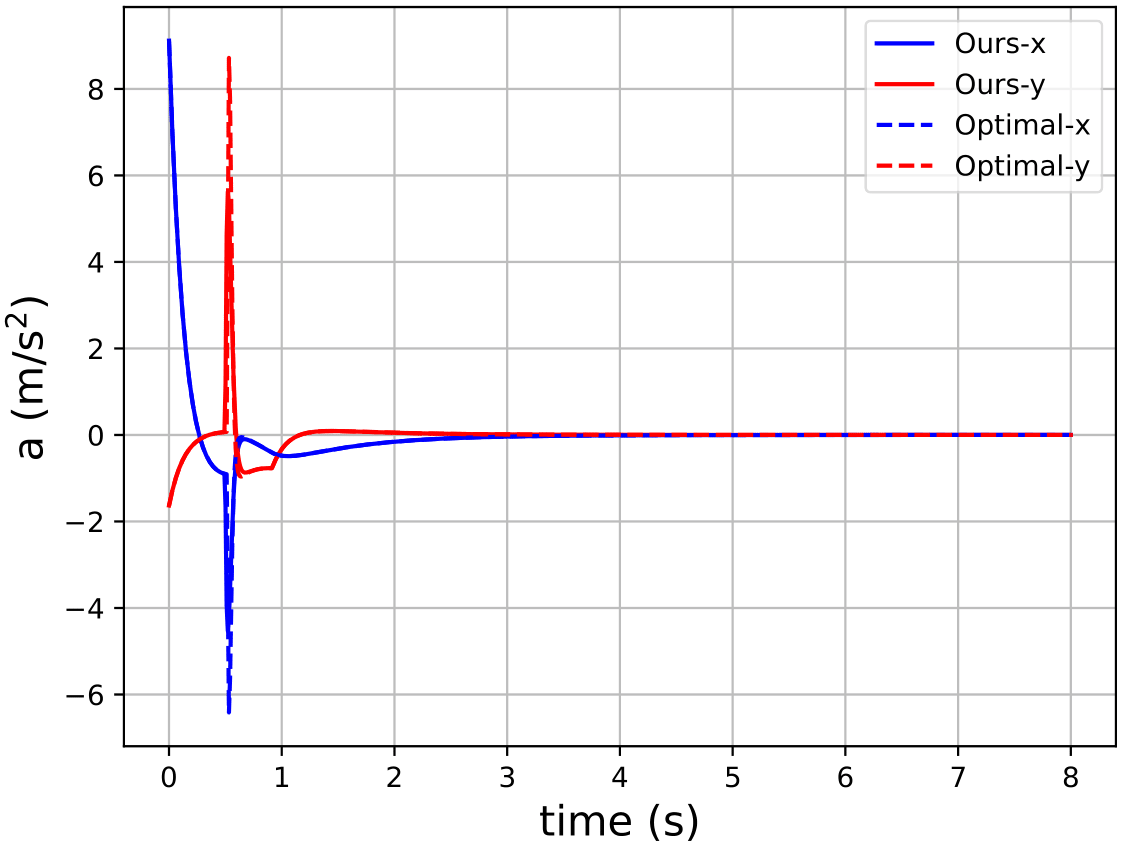}
    \caption{Environment 2: Control input}
    \end{subfigure}
\caption{2D double integrator (n=4, r=2) avoids a randomly generated obstacle in two different environments. The blue trajectory uses the proposed method, and the orange trajectory is the optimal performance reference that was generated by learning specifically for that environment. The corresponding control inputs are shown on the right side.}
\label{fig:2ddi}	
\end{figure}

\subsection{Simulation Setup}
We focus on the collision avoidance problem. 
We set up different environments with different obstacles, which are represented by ellipses:
\begin{equation}
\begin{aligned}
&h(y) = (y-y_c)^{\top}Q(y-y_c) - 1,\\
&y = Cx, Q = R(\theta)^{\top}\Lambda R(\theta),\\
\end{aligned}
\end{equation}
where y is the measurement variable, i.e., the position in Cartesian space. $R(\theta)$ is the rotation matrix defined by the orientation $\theta$ of the ellipse. $y_c$ is the center of the obstacle. $\Lambda$ is a diagonal matrix which represents the size of the obstacle. 
We define the environment information as $e = [y_c, diag(\Lambda)^{\top}, \theta]^{\top} \in \mathbb{R}^{5}$. 
For different environments, we randomly sample $e$ from a Gaussian distribution $\mathcal{P}_{e}$.

We use a linear quadratic regulator as the performance controller and a differentiable ECBF-QP as the safety filter. 
For the $\alpha$-net in the differentiable ECBF-QP, we use a feedforward neural network with two hidden layers. Each hidden layer size is 100. 
Based on Algorithm \ref{algo}, we train each system with $100$ iterations.
In each iteration, we sample $30$ environments, and for each rollout, the simulation time is 8s.
Futhermore, the initial condition of each system is selected randomly within a predefined region.
We use the same loss function for both systems, which is the sum of the distance between each point and the goal location.
\begin{equation}
\begin{aligned}
&\mathcal{L}_{\text{perf}}(\tau, e, x_0) = \sum_{t=0}^{T}\|x_{t} - x_{\text{goal}}\|^2.\\
\end{aligned}
\label{eq:loss_}
\end{equation}
The numerical values of this loss will serve as means to compare performance of different controllers.
Moreover, we train a $\alpha$-net only conditioned on a specific environment to serve as the optimal solution for that specific environment.

\subsection{Double Integrator Experiment}

Two representative validation results for 2D double integrator avoiding an obstacle with the proposed approach are shown in Fig.~\ref{fig:2ddi}, including trajectory and control input.
The start point is chosen randomly, the goal location is at $(1.0 , 0.0)$, and the obstacle is colored as blue.
Moreover, the proposed method is compared with the optimal performance solution, which is obtained by finding the best $K_\alpha$ based on the current environment, i.e., the environment in Fig.~\ref{fig:2ddi}.
In Environment 1, the losses defined in \eqref{eq:loss_} for our method and optimal performance solution are $26.88$ and $26.41$. In Environment 2, the losses are $25.88$ and $25.86$ for ours and optimal performance solution, respectively.
The simulation result shows that the performance of our proposed method is close to the optimal performance solution in terms of the loss function in \eqref{eq:loss_}. Also, our control inputs (solid lines) is similar to the optimal ones (dashed lines) as shown in Fig.~\ref{fig:2ddi} (b) and (d).

\subsection{Quadruple Integrator Experiment}
In Fig.~\ref{fig:2dqi}, we show that our approach can cope with a system with relative-degree four for generalization to novel environments.
\begin{figure}[t]
\centering
    \begin{subfigure}[r]{0.49\columnwidth}
        \includegraphics[width=\textwidth]{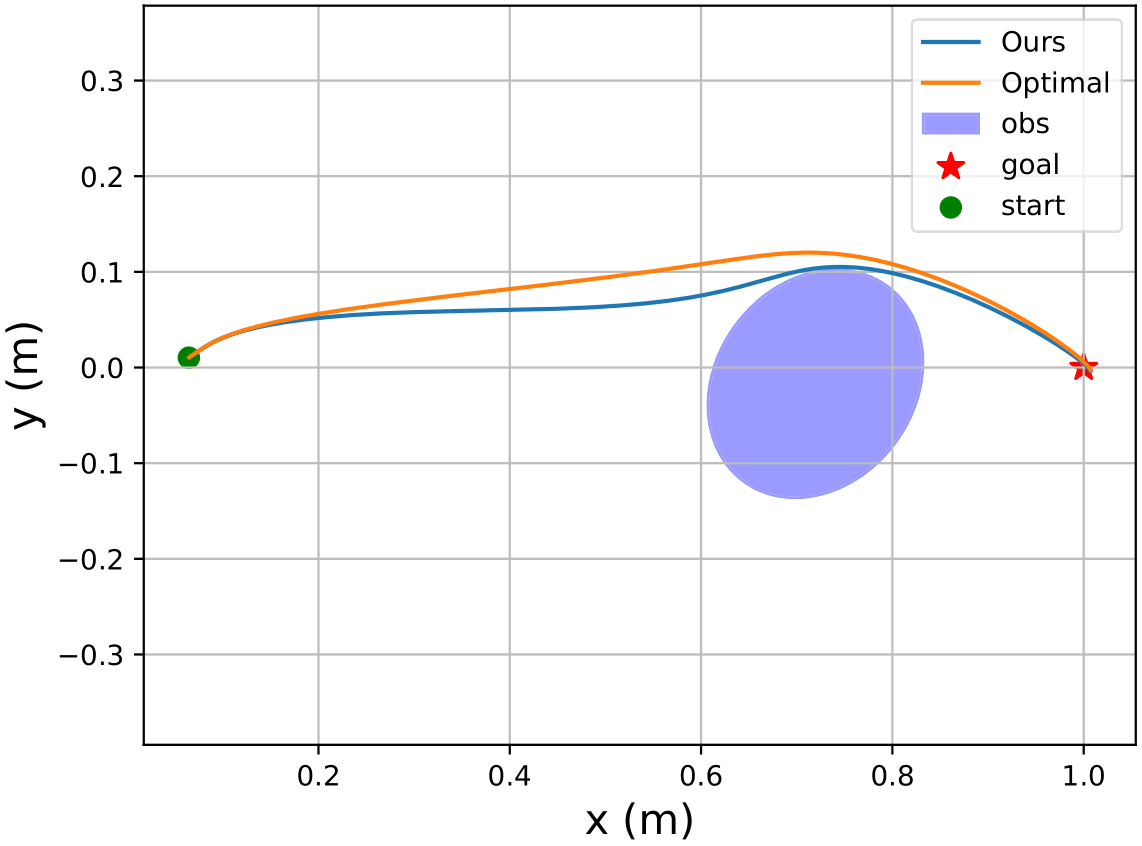}
    	\caption{Experiment 1: Trajectory}
    \end{subfigure}
    \begin{subfigure}[r]{0.49\columnwidth}
        \includegraphics[width=\textwidth]{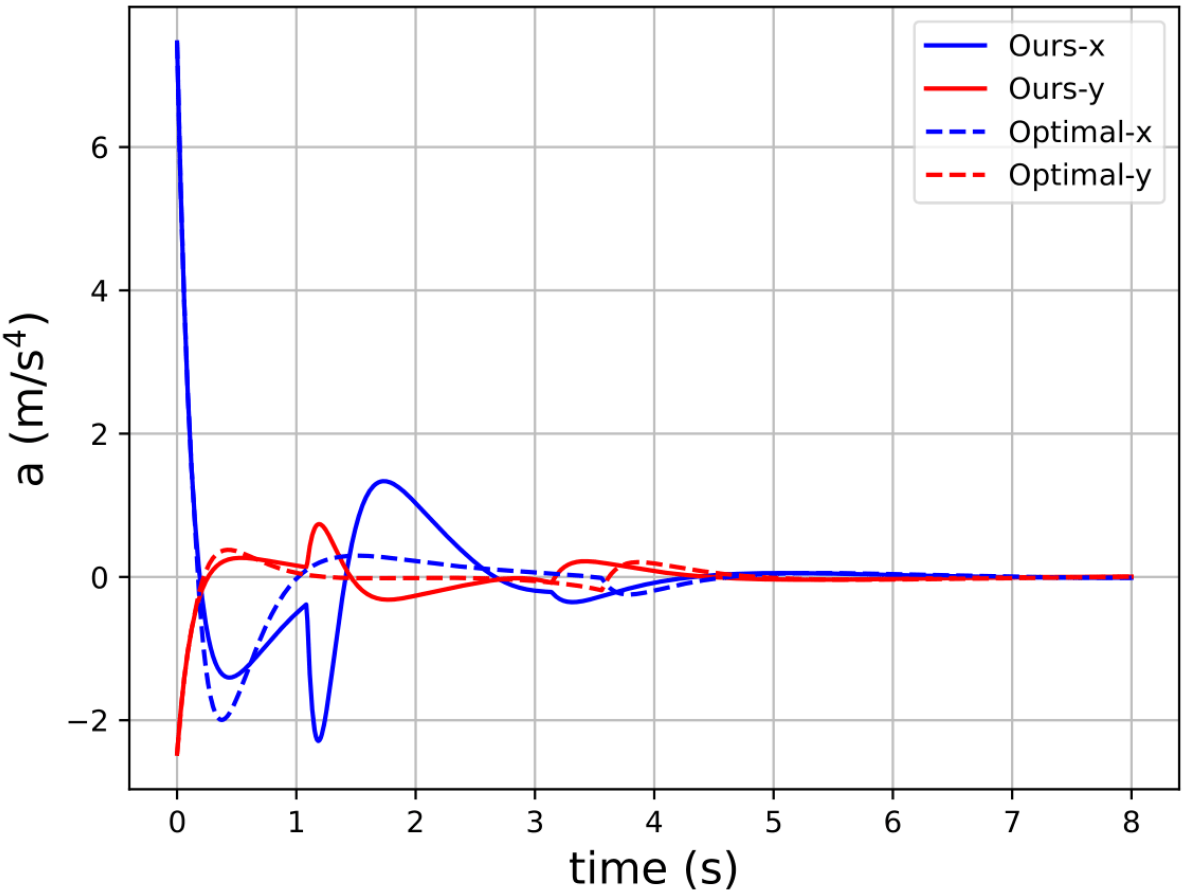}
    \caption{Experiment 1: Control input}
    \end{subfigure}
    \begin{subfigure}[r]{0.49\columnwidth}
        \includegraphics[width=\textwidth]{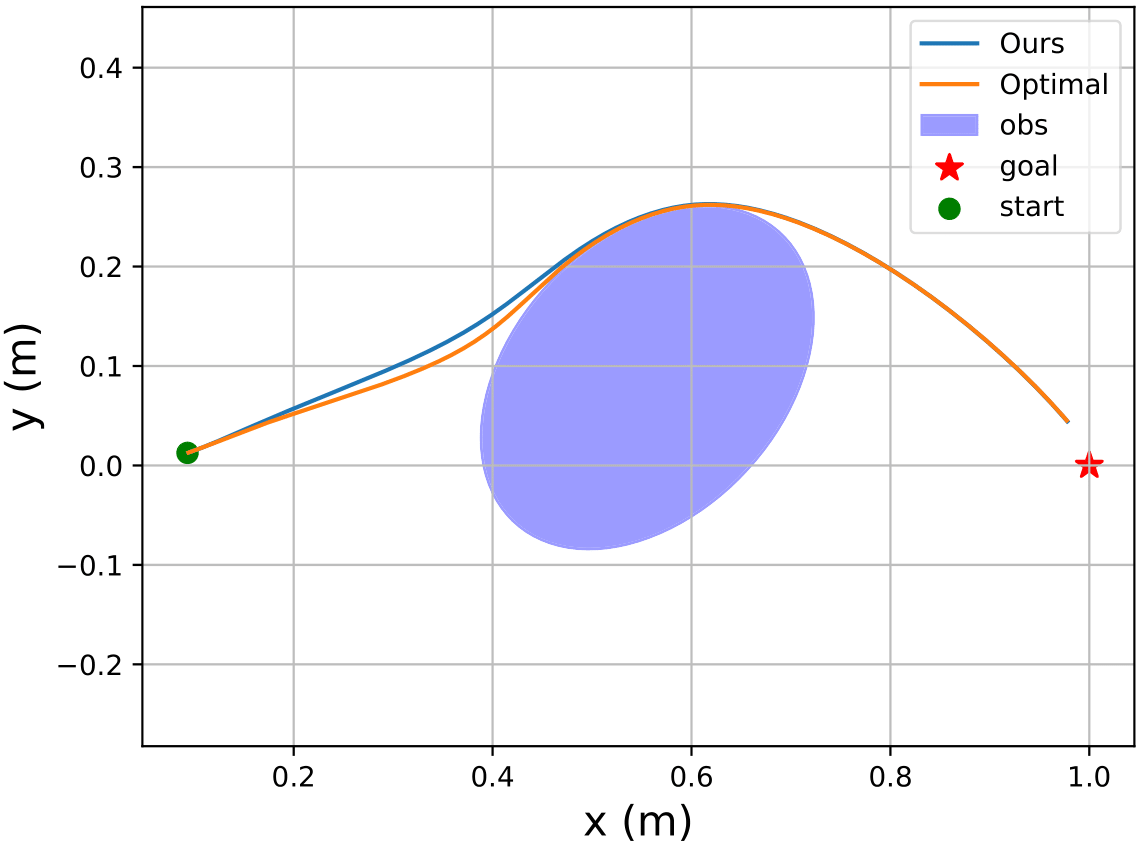}
    	\caption{Experiment 2: Trajectory}
    \end{subfigure}
    \begin{subfigure}[r]{0.49\columnwidth}
        \includegraphics[width=\textwidth]{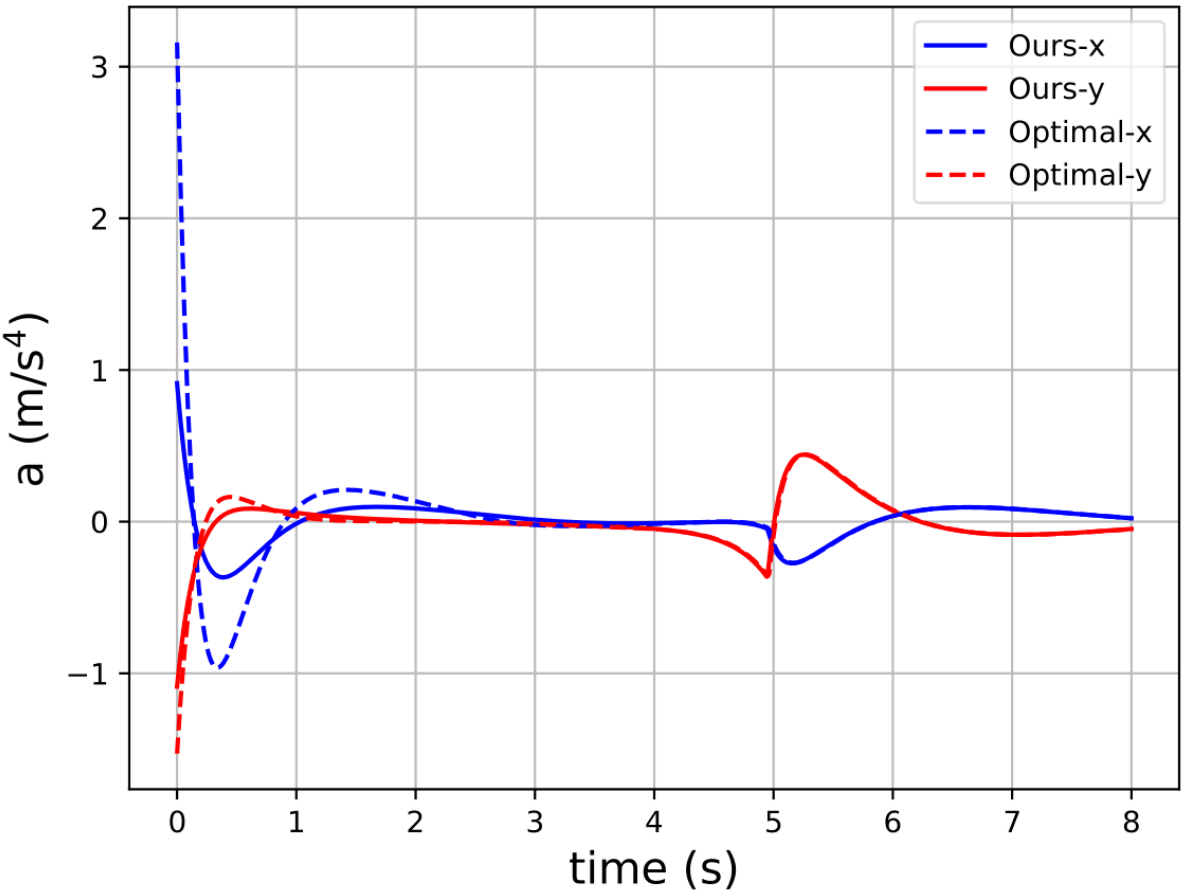}
    \caption{Experiment 2: Control input}
    \end{subfigure}
\caption{2D quadruple integrator (n=8, r=4) avoids a randomly generated obstacle in two different environments. The blue trajectory uses the proposed method, and the orange trajectory is the optimal performance reference that was generated by learning specifically for that environment. The corresponding control inputs are shown on the right side.}
\label{fig:2dqi}	
\vspace{-3mm}
\end{figure}
In both environments, the losses for the optimal performance solution is $26.71$ and $35.03$, whereas the losses for our method is $28.35$ and $35.36$, respectively. We also observe that the control inputs of the proposed method is similar to the optimal performance solution as shown in Fig.~\ref{fig:2dqi} (b) and (d). The results imply that our approach can determine a proper $K_{\alpha}$ given the environment information without any manually tuning process for high relative-degree systems.

\subsection{Multiple Obstacles Experiment}

To further validate the generalization ability, we extend the simulation setup of 2D double integrator from one obstacle to multiple obstacles.
We randomly generate two obstacles and formulate one ECBF constraint for each object accordingly in \eqref{equ:diff-cbf-1}. For each constraint, we use the same learned $\alpha$-net.
In this scenario, the proposed method needs to be able to generalize to more complex environments.
The simulation results of two examples are shown in Fig.~\ref{fig:generalization}.
In Experiment 1, the losses for our method and random valid $K_\alpha$ are $28.11$ and $32.50$, respectively,
and in Experiment 2, the corresponding losses are $23.96$ and $35.51$.
It shows that our approach successfully generalizes to multiple obstacle scenarios and outperforms the baseline by a large margin. 

\begin{figure}
\centering
    \begin{subfigure}[r]{0.49\columnwidth}
        \includegraphics[width=\textwidth]{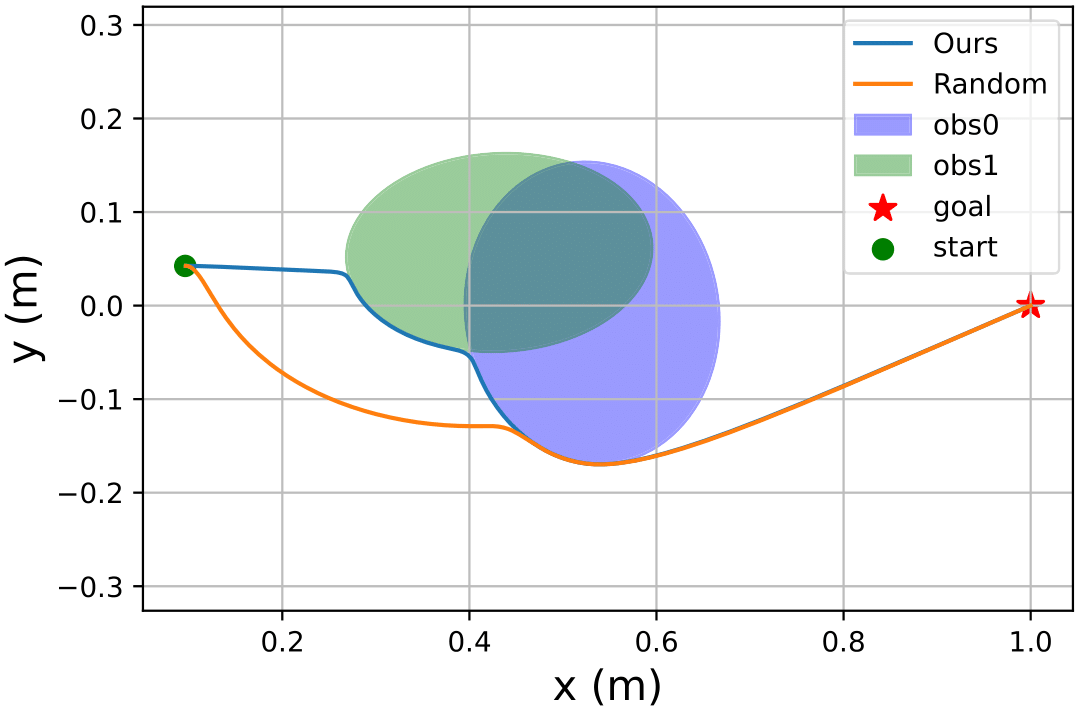}
    	\caption{Experiment 1}
    \end{subfigure}
    \begin{subfigure}[r]{0.49\columnwidth}
        \includegraphics[width=\textwidth]{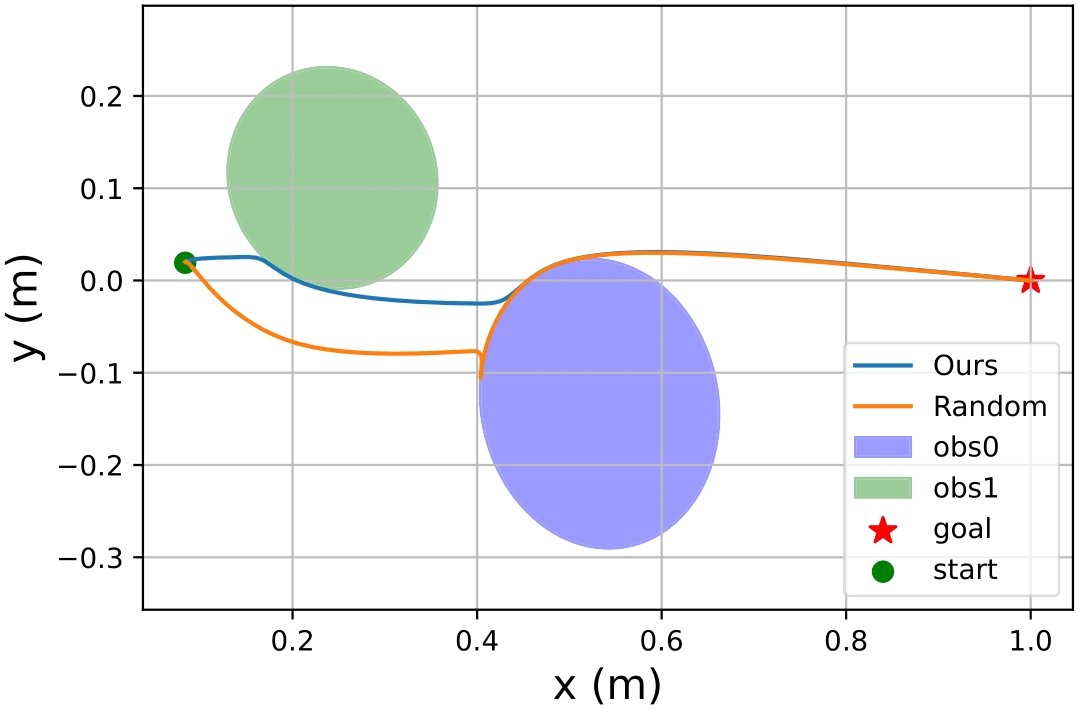}
    \caption{Experiment 2}
    \end{subfigure}
\caption{2D double integrator is able to generalize to novel environments with two randomly generated obstacles, which are not experienced during training. The blue trajectory utilizes the proposed framework, and the orange trajectory uses a random valid $K_{\alpha}$.}
\label{fig:generalization}	
\end{figure}

\subsection{Ablation Study}

We conduct two ablation studies using the 2D quadruple integrator to validate that our proposed design is necessary for generalizing to novel environments.

\subsubsection{\todo{Is the obstacle information indeed useful?}}
The first ablation study is to evaluate whether the obstacle information is necessary.
We train an $\alpha$-net with only one fixed obstacle during training as a baseline and then test it with novel environments.
The resulting trajectories are shown in Fig.~\ref{fig:ablation}(a).
Our proposed method has a loss of $30.26$, while the loss for baseline is $36.79$. This indicates that the obstacle information is necessary as an input to our neural network.

\subsubsection{\todo{Is a larger or smaller valid $K_{\alpha}$ better?}}
In Fig.~\ref{fig:ablation}(b), \todo{we investigate whether a larger or smaller valid $K_{\alpha}$ can achieve a better performance with respect to our loss function.
We scale $K_{\alpha}$ by multiplying the learned eigenvalues $\{p_{i}\}_{i=1, \dots, r}$ with coefficients $3.0$ and $0.5$.}
The losses for our method, $3.0$ scale, and $0.5$ scale are $33.90$, $65.90$, and $37.85$, respectively.
The resulting trajectories demonstrate that the proposed method outperforms the cases with the scales.

\begin{figure}
\centering
    \begin{subfigure}[r]{0.49\columnwidth}
        \includegraphics[width=\textwidth]{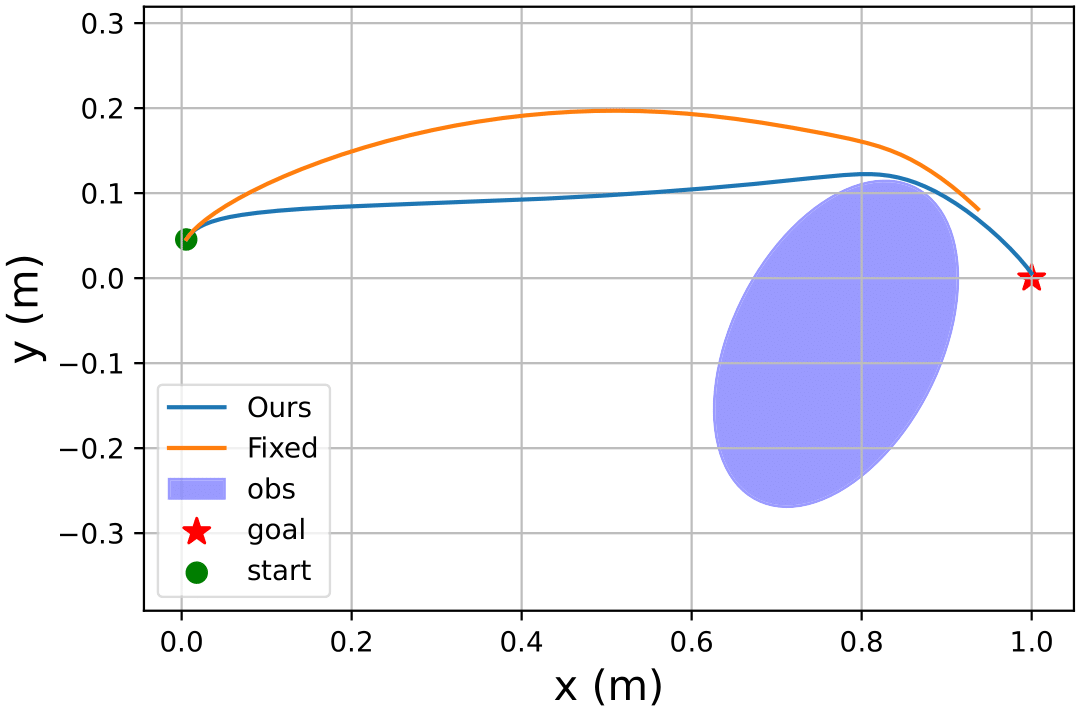}
    	\caption{Ablation study 1}
    \end{subfigure}
    \begin{subfigure}[r]{0.49\columnwidth}
        \includegraphics[width=\textwidth]{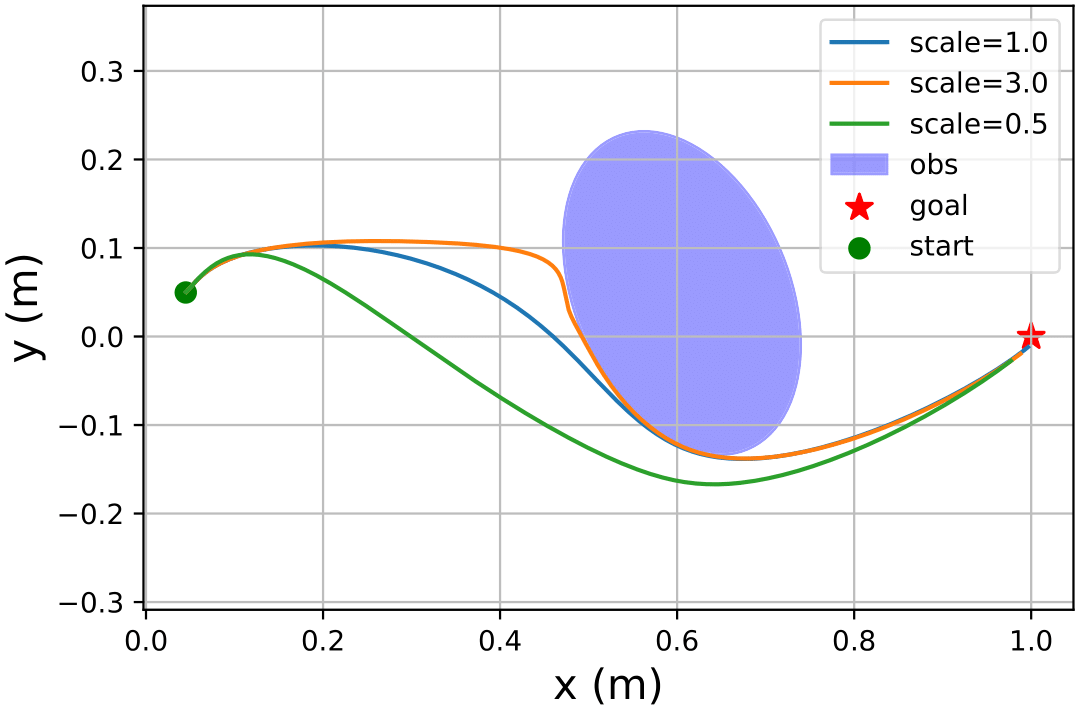}
    \caption{Ablation study 2}
    \end{subfigure}
\caption{Ablation study: (a) obstacle information is necessary for environment generation. The blue curve uses the proposed method, and the orange curve is the baseline with fixed obstacle information; (b) a larger or smaller valid $K_{\alpha}$ does not lead to a better performance. Different colors represent different scales.}
\label{fig:ablation}	
\end{figure}
\section{Discussion}
\label{sec:discussion}

We next provide the analysis of the proposed framework and a discussion on the limits and thoughts on future work.

\begin{table}
    \centering
    \resizebox{\linewidth}{!}{
    \begin{tabular}{|c|c|c|} \hline
       Scenario & Random & Ours \\ \hline
       Double Integrator & 26.8832$\pm$0.7259 & \textbf{26.6774}$\pm$\textbf{0.5998} \\ 
       Quadruple Integrator & 34.2979$\pm$1.1136 & \textbf{32.2187}$\pm$\textbf{0.8840} \\ 
       Multiple Obstacles & 62.2640$\pm$2.2653 & \textbf{40.6258}$\pm$\textbf{3.6791} \\
       \hline
    \end{tabular}
    }
    \caption{Benchmark of our proposed framework in three different scenarios: double integrator, quadruple integrator, and double integrator with two obstacles.}
    \label{tab:benchmark}
    \vspace{-2mm}
\end{table}

We carry out a performance benchmark in three different scenarios: 2D double integrator with one obstacle (Double Integrator), 2D quadruple integrator with one obstacle (Quadruple Integrator), and 2D double integrator with two obstacles (Multiple Obstacles).
We compare our method with random valid $K_\alpha$ using $200$ experiments for each scenario.
The mean and standard deviation of the losses are summarized in Tab.~\ref{tab:benchmark}.
We first conduct 4 subtasks, and each of them consists of 50 experiments. The mean and standard deviation are computed based on these subtasks.
The benchmark for Double Integrator and Quadruple Integrator further validates that the proposed framework is useful for generalization to novel environments.
Moreover, when we extend the learned $\alpha$-net to multiple obstacles, our method shows better results.

In terms of the overall controller design, we assume that a high-level controller is given and fixed in this work, which could be a valid assumption in many applications. Note that the overall performance is determined by both the high-level performance controller and the CBF-QP safety filter.
\section{Conclusion}
\label{sec:conclusion}

In this paper, we presented a learning differentiable safety-critical-control framework using control barrier functions for generalization to novel environments, which uses a learning-based method to choose an environment-dependent $K_\alpha$ in exponential control barrier function.
Moreover, based on the ECBF formulation, the proposed method ensures the forward invariance of the safe set.
We numerically verified the proposed method with 2D double and quadruple integrator systems in novel environments.
Our framework can be easily generalized to different shapes of obstacles and nonlinear dynamics. Also, different representation of environment information such as images and point cloud can be used. There are several interesting future directions. For instance, an integrated end-to-end framework can be designed for training the performance controller and ECBF-QP. Another promising future direction is to test our approach in more general scenarios.

\section*{Acknowledgement}
We would like to thank Ayush Agrawal for his helpful discussions.

\balance







\bibliographystyle{IEEEtranS}
\bibliography{reference.bib}{}
\end{document}